\newtheorem{theorem}{Theorem}
\newtheorem{definition}{Definition}
\newtheorem{lemma}{Lemma}
\newcommand{\sign}{\mathrm{sign}}
\newcommand{\seq}{\mbox{$\! \, = \, \!$}}
\title{\LARGE \bf
Lyapunov stability analysis of rigid body systems with multiple contacts by sums-of-squares programming}
\author{P\'eter~L.~V\'arkonyi
\thanks{*This work was supported by the National Research, Innovation and Development Office of Hungary under grant $K124002$.}
\thanks{P. L. V\'arkonyi is with the Department of Mechanics Materials and Structures, Budapest University of Technology and Economics, H-1111 Budapest, Hungary
        {\tt\small varkonyi.peter@epk.bme.hu}}%
}
\begin{document}

\maketitle
\thispagestyle{empty}
\pagestyle{empty}

\begin{abstract}

Reliable quasi-static object manuipulation and robotic locomotion require verification of the stability of equilibria under rigid contacts and friction. In a recent paper, M. Posa, M. Tobenkin, and R. Tedrake demonstrated that
sums-of-squares (SOS) programming can be used to verify Lyapunov stability via Lyapunov's direct method. This test was successfully applied to several simple problems with a single point contact. At the same time it has been found that this method is too conservative for several multi-contact systems. In this paper, an extension of Lyapunov's direct method is proposed,  which makes use of several Lyapunov functions, and which allows \emph{temporary} increase of those Lyapunov function along a motion trajectory. The proposed method remains compatible with SOS programming techniques. The improved stability test is successfully applied to a rigid body with 2 point contacts, for which the exact conditions of Lyapunov stability are unknown.

\end{abstract}

\section{INTRODUCTION}

Many tasks in robotics involve unilateral contact with friction between hard objects. Small perturbations of an equilibrium state may induce stick-slip transitions, as well as contact separation and impacts \cite{brogliato1999nonsmooth,leine2010historical}. Within the framework of rigid body dynamics,  non-smoothness, and discontinuity of the response prevents one from using classical tools of stability analysis like linearization. At the same time, dry friction induces continuous sets of equilibrium states \cite{leine2008stability}, which means that stable individual points within a set are not associated with local minima of potential energy. Moreover, the emerging hybrid dynamics has special features like Zeno sequences of impact events \cite{zhang2001zeno,or2010stability}, and issues of non-existence and non-uniqueness \cite{champneys2016painleve}. As a result, general methods to test the Lyapunov stability of equilibria under contact and friction remain unavailable \cite{liapunov2016stability,orlov2008discontinuous}. 

Stability analysis of a planar model with one single point contact is straightforward: an equilibrium is stable under any type of small state perturbation if it is stable under rolling motion, and additionally the Newtonian coefficient of restitution of light impacts in the neighborhood of that state is below 1. In contrast, stability analysis of models with multiple contact points becomes an open problem. In the case of 2 contact points and ideally inelastic impacts, an almost exact semi-analytic condition has been derived \cite{varkonyi2017lyapunov,or2021experimental}. Some highly conservative conditions have been found in the case of partially elastic impacts \cite{or2008hybrid,varkonyi2012lyapunov}. The last work used a custom-made Lyapunov function for the verification of stability.  Systems with more than 2 contact points, models with extended areas of contact, as well as 3D models remain unexplored.

A novel algorithmic approach to stability analysis by Posa et al. \cite{posa2015stability} uses sums-of-squares polynomials and semi-definite pogramming  to develop sufficient conditions of local Lyapunov stability of various systems. In addition, their method delivers  estimations of the basin of attraction of a stable point, and it can also be used to verify positive invariance of a set of states. This approach is quite general as it is formally  applicable to any system for which the equations of motion and the constraints are expressed as polynomial equations or inequalities. At the same time, success not guaranteed as the proposed conditions are conservative. 
Notably, all of the models successfully tested by \cite{posa2015stability} have only one point of contact, and it remains an open question if more challenging problems can be addressed by this method.

In this paper we focus on Lyapunov stability, i.e. resilience against local state perturbations. We find that the method proposed by \cite{posa2015stability} fails to verify stability of an important class of model problems: a planar model of a rigid body with 2 point contacts. This finding motivates the development of several improvements of the stability theory. Most importantly, the semi-definite programming techniques of \cite{posa2015stability} are combined with an extension of Lyapunov's direct method inspired by prior work of the author \cite{varkonyi2012lyapunov}. In its original form, Lyapunov's direct method uses Lyapunov functions defined over the state space of the system, which are non-increasing along motion trajectories. The proposed extension makes use of Lyapunov functions which may temporarily increase along trajectories, provided that they have an decreasing overall trend. In addition the theory is also extended here to multiple polynomial Lyapunov functions, which improves the applicability of semi-definite programming techniques. Finally, the present work makes use of a standard approximation of piecewise continuous systems: the so-called zero-order dynamics (ZOD). In each mode of motion, the equations of motion are approximated by their lowest-order (constant) terms, which gives accurate local description of the system in a small neighborhood of an equilibrium state. This approach is highly similar to linearization techniques of smooth dynamical systems.

In Section~\ref{sec:problem},  general notation, and problem statement are introduced, and key results of reference \cite{posa2015stability} are reviewed. This is followed by proposed extensions of existing stability theory in Section \ref{sec:extended}. Then, a family of test problems is introduced and analyzed (Section \ref{sec:example}). The paper is closed by Conclusions and by pointing out related open problems.

\section{PROBLEM STATEMENT AND PREVIOUS RESULTS}
\label{sec:problem}


\subsection{Problem statement}

We consider a rigid body or a  rigid multibody system in 2 dimensions with $c\geq 1$ unilateral point contacts and $n$ degrees of freedom. The state $x$ of the system is given by the state vector $x=(q,v)$ composed of generalized coordinates $q\in\mathbb{R}^n$ and generalized velocities $v=\dot q$ where dot means derivation with respect to time. Hence we have a $2n$-dimensional state space $x\in\mathcal{S}\equiv\mathbb{R}^{2n}$. We will consider an equilibrium state given by $x_0=(q_0,0)$. Without loss of generality $q_0=0$ is assumed. 

We use lower indices to denote elements of vectors, and columns of matrices. For example $\gamma_i$ is the $i^{th}$ element of vector $\gamma$, and $M_i$ is the $i^{th}$ column of matrix $M$. Furthermore $J_\gamma(q)$ means the Jacobian of $\gamma(q)$ if $\gamma$ is a vector or the gradient of $\gamma(q)$ if $\gamma$ is a scalar.  

The admissible set $\mathcal{A}$ is defined as a subset of the state space given by $c$ scalar inequality constraints
\begin{align}
\gamma_i(q)\geq 0; \; i\in\{1,2,...,c\}
\label{eq:gap}
\end{align}
where $\gamma_i(q)$ are \emph{gap functions} associated with the unilateral contacts. As we perform local analysis of the equilibrium $q_0=0$, we are only interested in constraints with $\gamma_i(0)=0$. 

The unilateral contacts may give rise to non-negative normal contact forces $\lambda_{iN}$ subject to the linear complementarity condition
\begin{align}
\lambda_{iN}\cdot \gamma_i = 0 \leq \lambda_{iN} , \gamma_i 
\end{align}
furthermore if $\gamma_i=0$ then we also have an analogous condition at the velocity level:
\begin{align}
\lambda_{iN}\cdot \dot \gamma_i = 0 \leq \lambda_{iN} , \dot \gamma_i
\end{align}
In the presence of friction, it is convenient to introduce tangental displacement functions $\sigma(q)$ associated with each contact, since the behaviors of contacts depend on the relative tangential velocities given by $\dot{\sigma}(q,v)$. The sign of $\dot\sigma_i$ may be positive, zero, or negative, which correspond to slip in the positive direction, stick, and negative slip at contact $i$. Coulomb friction force is subject to the constraint
\begin{align}
\lambda_{iT}\in -\sign(\dot \sigma_i)\mu_i\lambda_{iN}\; i\in\{1,2,...,c\}
\label{eq:v}
\end{align}
where $\mu_i\geq 0$ are friction coefficients associated with the contacts and $\sign(\cdot)$ denotes the set-valued sign function. 

The dynamics of the system is governed by the manipulator equations
\begin{align}
\dot q &= v\\
H(q)\dot v+C(q,v) & = J_N(q)\lambda_N+J_T(q)\lambda_T
\label{eq: manipulator}
\end{align}
where the mass matrix $H$ depends on the current configuration $q$ of the system. The term $C$ includes the effect of free forces as well as gyroscopic terms. The two terms on the right-hand side represent the effects of normal and tangential contact forces. If $H$ is invertible, one can express the instanataneous acceleration $\dot v$ in terms of the (unknown) contact forces. 

As an example, consider a planar rigid body with two sharp vertices resting on two straight surfaces as in Fig. \ref{fig:body}. We fix a global coordinate frame such that the origin conincides with the center of mass in the initial equilibrium configuration and the $x$ axis is parallel to the line through the initial positions of the contact points.  The object is subject to constant external forces, which are lumped into a resultant force of size $F$ and directional angle $\alpha$ acting at the center of mass as well as a resultant torque $T$. Without loss of generality we assume that the mass, the radius of gyration, and the gravitational constant are equal to 1. The remaining model parameters include the lengths $h$, $l_1$, $l_2$, the angles $\phi_1$, $\phi_2$ and the friction coefficients $\mu_1$, $\mu_2$. We can use the global coordinates $(x,z)$ of the center of mass, and the rotation angle $\theta$ of the body as generalized coordinates, i.e. $q=(x,z,\theta)$. The parameters of the manipulator equation and the gap functions and tangential displacement functions are:
\begin{align}
H=
\left[
\begin{array}{ccc}
1&0&0\\
0&1&0\\
0&0&1
\end{array}
\right ] ,
J_N=
\left[
\begin{array}{cc}
-\sin\phi_1&-\sin\phi_2\\
\cos\phi_1&\cos\phi_2\\
\zeta_1&\zeta_2
\end{array}
\right ]
\label{eq:H}\\
C=
\left[
\begin{array}{c}
F\sin\alpha\\
-F\cos\alpha\\
T
\end{array}
\right ],
J_T=
\left[
\begin{array}{cc}
\cos\phi_1&\cos\phi_2\\
\sin\phi_1&\sin\phi_2\\
\psi_1& \psi_2
\end{array}
\right ] \label{eq:C}\\
\gamma_i(q)=
-y\sin\phi_i+z\cos\phi_i - \xi_i+\psi_i
\label{eq:gammabiped}\\
\sigma_i(q)=y\cos\phi_i+z\sin\phi_i-\eta_i+\zeta_i  
\label{eq:sigmabiped}
\end{align}
where the following notation has been used:
\begin{align}
\eta_i&=l_i\cos\phi_i-h\sin\phi_i \label{eq:eta}\\
\xi_i&=-h\cos\phi_i-l_i\sin\phi_i\label{eq:xi}\\
\zeta_i&=l_i\cos(\phi_i-\theta)-h\sin(\phi_i-\theta)\\
\psi_i&=-h\cos(\phi_i-\theta)-l_i\sin(\phi_i-\theta)
\label{eq:psi}
\end{align}
\begin{figure}[h]
\centering
\includegraphics[width=\columnwidth]{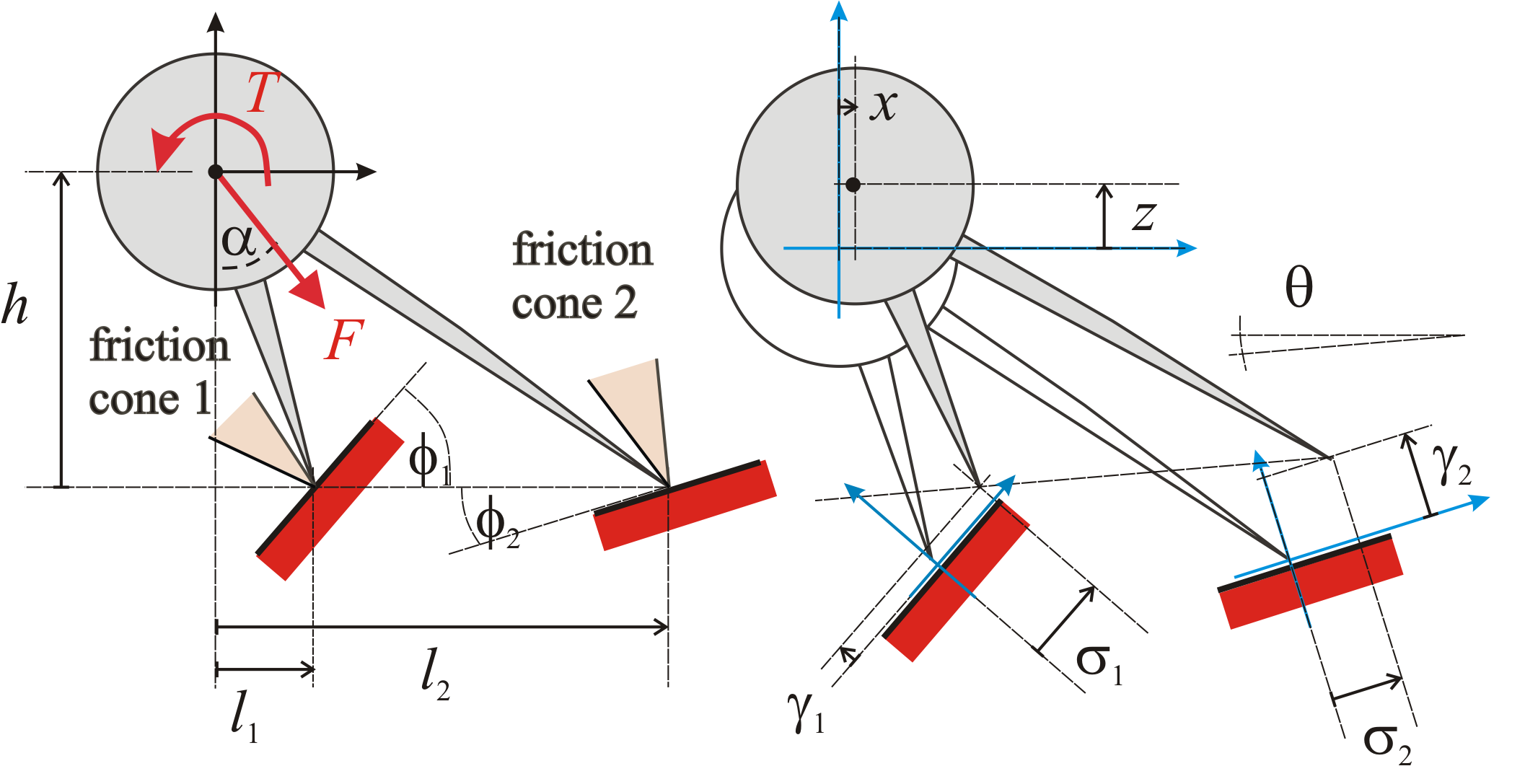}
\caption{A planar rigid body on two point contacts. Left: equilibrium configuration with model parameters. Right: general configuration with state variables, gap functions and tangential displacement functions.}
\label{fig:body}
\end{figure}

\subsection{The hybrid dynamics approach}
In order to find contact forces, and acceleration simultaneously, several standard methods can be used. Numerical methods may treat the problem as a linear complementarity problem, which adresses all cases of the sign function in a unified framework. In contrast, detailed analysis of the emerging motion is usually done within the framework of hybrid dynamics (especially for systems with moderate number of equilibria).
The hybrid dynamics approach is used here, i.e. it is assumed that the system undergoes episodes of continuous motion in one of its \emph{contact modes}. Such episodes are interrupted by \emph{contact mode transitions} and \emph{impacts}.

\subsection{Contact modes}

Contact modes are defined based on the signs of $\gamma_i$, $\dot\gamma_i$, and $\dot\sigma_i$. In particular each individual contact point of a planar model is in one of the following states: free flight (F), stick (S), and slip in either one of two directions (P,N), see Table \ref{tab.modes}. Thus the system has $4^c$ contact modes. Each contact point and contact mode has a set of \emph{kinematic admissibility contraints}, which are equalities or inequalities in the state variables. These constraints reduce the number of contact modes to be considered in a given state of the system. For example, point $i$ cannot be in slip or stick state if $\gamma_i>0$. 

In addition, each point delivers exactly two equality constraints involving  contact forces ($\lambda_{N,i}$, $\lambda_{T,i}$) and/or accelerations $\dot v$, which are combined with the manipulator equation in order to determine the instantaneous values of the contact forces. Finally, a second set of inequalitiy constraints  involves contact forces and accelerations. Those \emph{consistency constraints} are tested after the contact forces and accelerations have been determined.

We consider an \emph{equilibrium state}, i.e. a static state $x_0=(0,0)$, in which the contact mode with all contact points in stick state yields a consistent solution. We also restrict our attention to those systems in which kinematic constraints ensure immobility if all contacts are in S state. This is true for the example of Figure \ref{fig:body}, however not true if the same object has only one point contact.

An important limitation of the hybrid dynamics approach is the non-uniqueness and non-existence of consistent contact modes in some systems \cite{stewart2000rigid}. We will distinguish between two types of this situation. The first one is given by .
\begin{definition}\emph{
An equilibrium state $x_0$ is called \textbf{ambiguous}, if the system has (in addition to the sustained equilibrium state) a non-static consistent contact mode as well.
}\end{definition} 
Ambiguity is quite common and it has been proven that ambiguous equilibria are never Lyapunov-stable \cite{or2008hybrid-I}. Conditions of ambiguity for the previous example were also given by \cite{or2008hybrid-I}.

The second type involves solution non-existence as well as all forms of non-uniqueness in non-static states. It is often referred to as Painlev\'e's paradox \cite{champneys2016painleve}. Non-existence is resolvable by considering impulsive contact forces in non-impacting states. Non-uniqueness is in general not resolvable within the framework of rigid models. Nevertheless, with the approach that we develop in this paper it is possible to verify Lyapunov stability even when the hybrid dynamics approach fails to identify a unique solution.  

The second important limitation of the hybrid dynamics approach is complexity. The number of contact modes is exponential in $c$ which explains why this approach is limited to moderate values of $c$.

\begin{table*}
\centering{
\begin{tabular}{|c|l|l|l|l|}\hline
   Letter & contact mode & kinematic & equalities & consistency
   \\
   & & admissibility & & constraints \\   \hline
  {\bf S} & sticking & $\gamma_i = 0$ & $\ddot{\gamma}_i = 0$ & $|\lambda_{iT}| \leq \mu_i (\lambda_{iN})$ \\ 
  & &  $\dot\gamma_i \seq \dot\sigma_i \seq 0$ & $\ddot\sigma_i=0$ &  \\ \hline
{\bf F} & free & $\gamma_i \geq 0$ and & $\lambda_{iN}=\lambda_{iT} = 0$ & $\ddot{\gamma}_i > 0$ if $\gamma_i \seq \dot\gamma_i \seq 0$\\
& &$\dot\gamma_i \geq 0$ if $\gamma_i \seq 0$ & &  \\ \hline
{\bf P} & positive slip & $\gamma_i\seq \dot\gamma_i \seq 0$, & $\ddot{\gamma}_i =0$, & $\lambda_{iN} \geq 0 $,\\ & & $\dot\sigma_i \geq 0$ & $\lambda_{iT} =  -\mu (\lambda_{iN})  $& $\ddot\sigma_i > 0$ if $\dot\sigma_i \seq 0$\\ \hline
{\bf N} & negative slip & $\gamma_i\seq \dot\gamma_i \seq 0$, & $\ddot{\gamma}_i =0$, & $\lambda_{iN} \geq 0 $,\\ & & $\dot\sigma_i \leq 0$ & $\lambda_{iT} =  \mu (\lambda_{iN})  $& $\ddot\sigma_i < 0$ if $\dot\sigma_i \seq 0$\\ \hline
\end{tabular}
\caption{Contact modes of a single contact point}
\label{tab.modes}}

\end{table*}

\subsection{Contact mode transitions and impacts}
Continuous motion in a given contact mode is not possible unless the corresponding constraints are satisfied. If a consistency constraints or an admissibility constraint related to tangential velocities is violated, a \emph{contact mode transition} occurs. For example violation of $\lambda_{iN}\geq0$ triggers liftoff at point $i$ and violation of $\dot\sigma_i>0$ triggers slip-stick transition or slip reversal at point $i$. These transitions are marked by continuous but non-smooth velocity functions. 

In contrast, if any of the kinematic admissibility constraints related to  $\gamma_i$ and $\dot \gamma_i$ is violated (in other words, if the boundary of the admissible set $\mathcal{A}$ is reached), then an impact occurs. The hybrid dynamics approach requires \emph{rigid impact models}, which treat impacts as instantaneous velocity jumps generated by instantaneous impulses at the contact points. Hence velocity becomes a discontinuous, piecewise smooth function of time, with non-identical left and right limits $v(t^-)$, $v(t^+)$ at impact times. 

Hence, an impact is modeled by a map assigning to each possible pre-impact state $(q(t),v(t^-))$ with $\gamma_i=0$ and $\dot \gamma_i^-\leq 0$ for some $i$, an admissible post-impact state with updated velocity value $v(t^+)$ satisfying the condition $\dot \gamma_i\geq0$.

For an \emph{impact at a single point} $i$, many algebraic impact models exist. We will adopt the classical Whittaker-Kane-Levinson model \cite{kane1985dynamics}, which assumes that the Newtonian coefficient of restitution parameter $e$ is known in advance. Hence the normal component of the post-impact normal velocity is given by:
$
\dot\gamma_i(q(t),v(t^+))=-e\dot\gamma_i(q(t),v(t^-))
$
Given that $\gamma$ is a function of $q$, the previous relation is equivalently expressed as 
\begin{equation}
J_{\gamma,i}(q)^Tv(t^+)=
-e
J_{\gamma i}(q)^Tv(t^-).
\end{equation}

In addition, it is assumed that Coulomb's law is satisfied in the following form:
\begin{equation}
\Lambda_{iT}\in -\sign(\dot \sigma_i(t^+))\Lambda_{iN}\
\label{eq:Coulomb-impact}
\end{equation}
where $\Lambda_{Ni}$ and $\Lambda_{T,i}$ are signed normal and tangential impulses at point $i$.

The constraints outlined above can be combined with a discrete version of the manipulator equation
\begin{equation}
 H(q)(v(t^+)-v(t^+)) = J_{Ni}(q)\Lambda_{Ni}+J_{T,i}(q)\Lambda_{T,i}
 \label{eq:manipulator2}
\end{equation}
where $J_{N,i}$, $J_{T,i}$ are the $i^{th}$ column of $J_{N}$ and $J_{T}$, respectively. By combining these equations the impact map  can be expressed in closed form \cite{kane1987explicit}. In particular, an impact can be sticking ($\dot \sigma_i(t^+)=0$) or slipping in either direction ($\pm \dot \sigma_i(t^+)>0$) and it can be shown that exactly one of the 3 solution candidates satisfies the consistency conditions. 

The impulse of a sticking impact is then given by
\begin{align}
\Lambda_{N,i}&=\frac{ 
-(1+e)J_{\gamma,i}(q)^Tv(t^-)-J_{\gamma,i}(q)^TH^{-1}(q)J_{T,i}(q)A}
{J_{\gamma,i}(q)^TH^{-1}(q)(J_{N,i}+J_{T,i}B)}\\
\Lambda_{T,i}&=A+B\Lambda_{N,i}
\end{align}
with 
$$
A=\frac{J_{\sigma i}^Tv(t^-)}{J_{\sigma i}^TH^{-1}(q)J_{T,i}}
,
B=\frac{J_{\sigma i}^TH^{-1}(q)J_{N,i}}{J_{\sigma i}^TH^{-1}(q)J_{T,i}}.
$$
The impulse of a slipping impact in the $\pm$ direction is given by
\begin{align}
\Lambda_{N,i}&=\frac{ 
-(1+e)J_{\gamma,i}(q)^Tv(t^-)}
{J_{\gamma,i}(q)^TH^{-1}(q)(J_{N,i}\mp \mu_iJ_{T,i})}\\
\Lambda_{T,i}&=\mp\mu_i \Lambda_{N,i}
\end{align}

Simultaneous \emph{impacts at several contact points} can be modeled in several different ways. For example, iterative solution techniques of the numerical time-stepping methods of contact dynamics compute the final states of multi-point impacts by considering sequences of impacts with single points, where the subsequent impact point are chosen according to a priori rules \cite{acary2008numerical}. In this work, we adopt the idea that multi-point impacts are equivalent of some (finite or infinite) sequence of single-point impacts, but we do not make any particular assumption about the order of impact points in those sequences. This assumption allows us to restrict our attention to single-point impacts.

\subsection{Zero-order dynamics}\label{sec:ZOD}
In a small neighborhood of the equilibrium state $x_0=0$, the dynamics can be approximated by modified versions of equation \eqref{eq: manipulator},\eqref{eq:manipulator2} and of the constraints, in which the gap and tangential displacement functions are linearized around $q=0$ and all other state-dependent terms are approximated by their nominal values at $(q,v)=(0,0)$:
\begin{align}
\gamma(q)&\approx J_{\gamma}(0)q;\,\sigma(q)\approx J_{\sigma}(0)q\\
H(q)&\approx H(0);\,C(q,v)\approx C(0,0)\\
J_N(q)&\approx J_N(0);\,J_T(q)\approx J_T(0)
\end{align}
%

For the previously mentioned example of a planar biped, $H$, $C$ are state-independent. $J_N$, and $J_T$ and the functions \eqref{eq:gammabiped}-\eqref{eq:sigmabiped} are approximated by:
\begin{align}
J_N\approx 
\left[
\begin{array}{cc}
-\sin\phi_1&-\sin\phi_2\\
\cos\phi_1&\cos\phi_2\\
\eta_1&\eta_2
\end{array}
\right ] 
\label{eq:JbipedZod}\\
J_T\approx 
\left[
\begin{array}{cc}
\cos\phi_1&\cos\phi_2\\
\sin\phi_1&\sin\phi_2\\
\xi_1& \xi_2
\end{array}
\right ] \\
\gamma_i(q)\approx 
-y\sin\phi_i+z\cos\phi_i + \theta\eta_i \label{eq:gammabipedZOD}\\
\sigma_i(q)\approx y\cos\phi_i + z\sin\phi_i-\theta\xi_i  
\end{align}
where we used the previously introduced shorthand notations \eqref{eq:eta}, \eqref{eq:xi}.
This approximation will be referred to as zero-order dynamics or ZOD. The ZOD  delivers linear admissibility conditions, and constant accelerations and contact forces in each contact mode. On the one hand, the ZOD allows to solve the equations of motion in all contact modes in closed from, which will be exploited later. On the other hand, the consistency conditions of each contact mode related to accelerations are state-independent: they can be verified for each contact mode in a single step. The ZOD approximation is also applied to impact maps.

The ZOD approximation yields a close approximation of the real dynamics in a small neighborhood of state $x=0$. Nevertheless this approximation brakes down for systems with marginally consistent contact modes. For example, the ZOD predicts $\ddot\gamma_2 > \ddot\gamma_1 = 0$ for the planar biped in $FF$ mode if $\alpha=\pi/2$, $\phi_1=0$, and $-\pi < \phi_2 < 0$. Hence $FF$ is marginally consistent in the state $q=v=0$. According to the exact equations, the same system has $\ddot\gamma_1>0$ in each state with $q=0$ and $\dot\theta\neq 0$ if the geometric parameter $h$ is strictly positive. Hence the consistency of FF under the exact dynamics in a small neighborhood of $x=0$ cannot be decided based on the ZOD. 

A similar problem may occur in relation with marginally admissible contact modes. For example,consider the planar biped with $\phi_1=\pm\pi/2$, $l_1>l_2$, $h=0$ in a state such that point 2 is immobile, $\dot\theta > 0$, and  $\theta\neq 0$. Contact mode $PS$ is marginally kinematically admissible under the ZOD but the same mode under the exact dynamics is not admissible as $\gamma_1>0$ .

In what follows we exclude the marginal cases mentioned above, and we consider dynamics under the ZOD. It is not proven formally that local stability under ZOD implies local stability under the exact equations of motion. Nevertheless the numerical techniques used for stability verification are robust against small numerical error, hence it is plausible to assume that the (locally) small error introduced by the ZOD approximation does not invalidate the stability tests with the exception of the degenerate cases outlined above.

\subsection{Stability}
We will use the standard concept of \emph{Lyapunov stability}. For continuous-time systems, Lyapunov stability is defined as
\begin{definition}\emph{ Let $x_0=(q_0,0)$ be an equilibrium state.  This configuration is called \textbf{Lyapunov stable (LS)} if for every arbitrarily small $\epsilon > 0$ there exists $\delta(\epsilon)>0$ such that for any kinematically admissible initial state $x(0)$ that satisfies
$|x(0)-x_0|<\delta $,
 all states $x$ along the emerging motion trajectory satisfy $|x|<\epsilon$.
}\end{definition}

The trajectories of impacting (hybrid) systems are piecewise smooth, but discontinuous due to impacts. 
The definition of Lyapunov stability stated above is applicable to such systems.

Lyapunov stability is most often demonstrated with the help of Lyapunov's direct method: i.e. by constructing a scalar Lyapunov function $V(x)$ over the state space such that $V$ has a local minimumpoint at the equilibrium state under investigation, and $V$ is non-increasing along motion trajectories.  The adaptation of Lyapunov's direct method to impacting systems \cite{leine2007stability} is summarized below. 

Let $\mathcal{B}_\rho$ denote a ball of radius $\rho$ in state space centered around $x_0$. Furthermore, let $\textrm{cl}$ denote the closure of a set. For example, $\textrm{cl}(\mathcal{A})$ is the union of the admissible set and possible pre-impact states where $\dot\gamma_i<\gamma_i=0$ for some $i$. 
\begin{theorem}\emph{[\cite{leine2007stability}, Thm. 6.23] \label{thm:Lyapunov1} If there exist
\begin{enumerate}
\item a positive value $h$
\item a continuous, strictly increasing scalar function $\alpha$ with $\alpha(0)=0$
\item a continuously differentiable function $V : \mathbb{R}^{2n}\rightarrow \mathbb{R}$ such that for any initial state $x\in \mathcal{B}_h \cap \textrm{cl}(\mathcal{A})$ \begin{enumerate} 
\item $V(x)\geq\alpha(|x-x_0|)$
\item $V$ is non-increasing along any continuous piece of trajectory through $x$ and at jumps associated with impacts 
\end{enumerate}
\end{enumerate}
then $x_0$ is stable in the sense of Lyapunov.
}\end{theorem}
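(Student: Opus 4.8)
The plan is to run the classical sublevel-set argument behind Lyapunov's direct method, with the only genuinely new ingredient being the treatment of the velocity jumps at impacts, which is exactly what hypothesis (3b) is designed to absorb. First I would fix an arbitrary $\epsilon>0$ and, shrinking it if necessary, assume $\epsilon<h$; then set $c:=\alpha(\epsilon)>0$. Since $V$ is continuously differentiable (hence continuous) and attains its minimum $V(x_0)=0$ at the equilibrium (the standing assumption on $V$), there is $\delta=\delta(\epsilon)\in(0,\epsilon)$ with $V(x)<c$ for all $x\in\mathcal{B}_\delta$. The heart of the proof is to show that every admissible trajectory issuing from $\mathcal{B}_\delta\cap\textrm{cl}(\mathcal{A})$ is trapped in the $V$-sublevel set $\{x\in\mathcal{B}_h\cap\textrm{cl}(\mathcal{A}):V(x)\le V(x(0))\}$, together with the observation that, by (3a) and strict monotonicity of $\alpha$, this sublevel set lies inside $\mathcal{B}_\epsilon$: indeed $V(x)\le V(x(0))<c=\alpha(\epsilon)$ forces $\alpha(|x-x_0|)<\alpha(\epsilon)$, i.e. $|x-x_0|<\epsilon$.

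To prove the trapping I would take a trajectory $x(\cdot)$ with $|x(0)-x_0|<\delta$ and let $\tau$ be the infimum of times $t$ with $|x(t)-x_0|\ge\epsilon$ (if no such $t$ exists we are done). On $[0,\tau)$ the state — including every pre- and post-impact limit — remains in $\mathcal{B}_\epsilon\subsetneq\mathcal{B}_h$, hence in $\mathcal{B}_h\cap\textrm{cl}(\mathcal{A})$, so hypothesis (3b) applies verbatim: $V$ is non-increasing along each continuous piece and does not increase across any impact. Therefore $t\mapsto V(x(t))$ is non-increasing on $[0,\tau)$ and $V(x(t))\le V(x(0))<c$ there. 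Now if $x(\cdot)$ is continuous at $\tau$, then $|x(\tau)-x_0|=\epsilon$ by continuity and minimality of $\tau$, so continuity of $V$ gives $c=\alpha(\epsilon)\le V(x(\tau))=\lim_{t\uparrow\tau}V(x(t))\le V(x(0))<c$, a contradiction; and if $x(\cdot)$ jumps at $\tau$, the pre-impact limit satisfies $V(x(\tau^-))\le V(x(0))<c$, hence $|x(\tau^-)-x_0|<\epsilon$ by (3a), so the impact clause of (3b) yields $V(x(\tau^+))\le V(x(\tau^-))<c$, whence $|x(\tau^+)-x_0|<\epsilon$ by (3a) again, and continuity of the post-impact piece keeps $|x-x_0|<\epsilon$ on a right neighbourhood of $\tau$, contradicting the choice of $\tau$. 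Hence $|x(t)-x_0|<\epsilon$ for all $t$, and since $\delta$ depends only on $\epsilon$, $x_0$ is Lyapunov stable.

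The step I expect to be the real obstacle is the jump case just sketched: because the trajectory is discontinuous at impacts, one cannot argue ``the trajectory must touch the sphere $\partial\mathcal{B}_\epsilon$ before it can leave $\mathcal{B}_\epsilon$'' — a velocity jump could in principle step over that sphere. The resolution is structural: one never reasons about the sphere, only about the sublevel set $\{V\le V(x(0))\}$, since an impact can only lower $V$ and so cannot carry the state from $\{V<c\}$ into $\{V\ge c\}\supseteq\{x:|x-x_0|\ge\epsilon\}$. Two further technical points should be acknowledged but are not substantive here: (i) applying (3b) requires the states involved to lie in $\mathcal{B}_h\cap\textrm{cl}(\mathcal{A})$, which holds because the sublevel set sits inside $\mathcal{B}_\epsilon\subsetneq\mathcal{B}_h$ and because the adopted rigid impact law keeps solutions in $\textrm{cl}(\mathcal{A})$ by construction; and (ii) existence and forward continuation of the trajectory — including past possible Zeno accumulations of impacts — is part of the solution concept used throughout, the statement being read as ``for every solution'', and no finite-time escape can occur since the solution is confined to the bounded set $\{V\le V(x(0))\}$.
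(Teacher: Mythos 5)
Your proposal is correct and follows essentially the same route as the paper: fix $\epsilon$, set $c=\alpha(\min(\epsilon,h))$, observe via (3a) that the sublevel set $\{V\le c\}$ sits inside $\mathcal{B}_{\min(h,\epsilon)}$, pick $\delta$ so admissible initial states in $\mathcal{B}_\delta$ start inside that sublevel set, and conclude by positive invariance. The only difference is that the paper outsources the positive-invariance step to a cited proposition, whereas you prove it directly via a first-exit-time contradiction that correctly separates the continuous and impact cases (and you rightly flag the implicit standing assumption $V(x_0)=0$, which both proofs need).
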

The proof of this statement is reviewed below, as its logical steps will be reused during the development of the extended stability theory.

\begin{proof}
Consider an arbitrary $\epsilon>0$. Define $c$ as $c=\alpha(\min(\epsilon,h))$. Then $V(x)\geq c$ along the boundary of $\mathcal{B}_{\min(h,\epsilon)}$ hence $V(x)$ has a closed, connected sublevel set  $\Omega_c=\lbrace x: V(x)\leq c\rbrace$ such that $\Omega_c$ contains $x_0$ in its interior, furthermore $\Omega_c\subset \mathcal{B}_{\min(h,\epsilon)}$. Due to condition 3b), $\Omega_c$ is positively invariant set of the dynamics (see \cite{leine2007stability}, Proposition 6.5).
Then we can find a scalar $\delta>0$ such that $\mathcal{B}_\delta\subset(\Omega_c\cup\mathcal{A}^c)$ where $^c$ means complement. Now if the initial point of a trajectory is in $\mathcal{B}_\delta$, then the positive invariance of $\Omega_c$ implies that the trajectory remains in $\mathcal{B}_\epsilon$ at all times, which proves stability.    
\end{proof}

\subsection{Semi-definite programming and SOS polynomials} \label{sec:SOS}
According to Theorem \ref{thm:Lyapunov1}, proving Lyapunov stability with the aid of Lyapunov's direct method requires the construction of functions, which are provably non-negative under certain inequality constraints. In general there is no efficient computational method to test non-negativity of a function or even of a polynomial function \cite{murty1985some}. 

However for a polynomial function to be non-negative, it is sufficient to prove that it can be written as the sum of squares of some polynomials, i.e. it is SOS.  Testing if a polynomial is SOS can be formulated as a semi-definite programming task, for which efficient numerical implementations exist \cite{marshall2008positive}. The theory of SOS polynomials has also been extended to optimization algorithms over SOS polynomials \cite{parrilo2003semidefinite}. Hence, SOS polynomials are highly useful tools for the verification of Lyapunov stability. Polynomials, which are positive under polynomial equality or inequality constraints can also be constructed by searching for unconstrained SOS polynomials, see Table~ \ref{tab:SOS}. A combination of equality and inequality constraints can also be treated in  a similar fashion, but this straightforward extension is not shown in the table.

\begin{table}
\centering{
\begin{tabular}{|p{0.4\columnwidth}|p{0.4\columnwidth}|}\hline
\bf{Condition to be satisfied}  & \bf{Sufficient condition using SOS polynomials}\\\hline
Find  $f(x):\mathbb{R}^n\rightarrow \mathbb{R}$ such that $\forall x\in \mathbb{R}^n: f(x)\geq 0$  
& 
Find  $f(x)\in \mathcal{P}_{SOS}$ 
\\ \hline
Given  $g_i(x)\in\mathcal{P}:\mathbb{R}^n\rightarrow \mathbb{R}$ ($i=1,2,...,m$), find  $f(x):\mathbb{R}^n\rightarrow \mathbb{R}$ such that $\forall i: g_i(x)=0 \implies f(x)\geq 0$  
& 
Find  $h_i(x)\in\mathcal{P}$ ($i=1,2,...,m$) such that $f(x)-\sum_{i=1}^mh_i(x)g_i(x)\in \mathcal{P}_{SOS}$ 
\\ \hline
Given $g_i(x)\in\mathcal{P}:\mathbb{R}^n\rightarrow \mathbb{R}$ ($i=1,2,...,m$), find  $f(x):\mathbb{R}^n\rightarrow \mathbb{R}$ such that $g(x)\geq 0 \implies f(x)\geq 0$  
& 
Find  $h_i(x)\in \mathcal{P}_{SOS}$ ($i=1,2,...,m$) such that $f(x)-\sum_{i=1}^mh_i(x)g_i(x)\in \mathcal{P}_{SOS}$
\\ \hline 
\end{tabular}
\caption{Sufficient conditions of (constrained) non-negativity  using SOS polynomials. $\mathcal{P}$, and $\mathcal{P}_{SOS}$ denote the set of polynomials, and SOS polynomials.}
\label{tab:SOS}
}
\end{table}


In a recent paper \cite{posa2015stability}, Posa et al., point out that the admissibility and consistency constraints of impacting systems are often polynomial functions of appropriately chosen variables. Hence Lyapunov's direct method can be implemented as an SOS optimization problem. Even though the replacement of non-negativity by the SOS property is conservative, the Lyapunov stability of several systems with one single contact point was succesfully verified by \cite{posa2015stability}. The same method could also provide conservative estimates of the basin of attraction of a stable equilibrium as well as proof of positive invariance of some subsets of $\mathcal{S}$, however we are not interested in these extensions in the present paper. Attempts to test the stability of systems with multiple contacts have not been reported.

\section{Extended stability theory}\label{sec:extended}


In order to test applicability to multi-contact systems, 
the stability test proposed by \cite{posa2015stability} has been implemented using the Mosek solver \cite{andersen2000mosek} available through the Yalmip toolbox of MatLab \cite{lofberg2004yalmip}. We tested the equilibria of the previously introduced planar biped (Figure \ref{fig:body}). Albeit this system has a highly non-trivial behaviour, almost exact conditions of stability are available if impacts are inelastic \cite{varkonyi2017lyapunov,or2021experimental}. Little is known about stability in the more general case of $e>0$, however highly conservative stability conditions have been reported in \cite{or2008hybrid,varkonyi2012lyapunov}. These preliminary tests proved unsuccessful: no certificates of stability were found by the algorithm, moreover the corresponding semi-definite programming task was in most cases flagged by the Mosek solver as provably infeasible. This negative result suggests that the original stability test has limited ability to verify stability in multi-contact systems. The failure of the test inspired the development of an improved stability theory, which can also address these systems.

In a previous work of the author \cite{varkonyi2012lyapunov}, a highly conservative condition of stability was developed for the planar biped in the case of $\phi_1=\phi_2=0$. That work used a Lyapunov-type function composed as the envelope of 2 smooth functions: $V(x)=\max(V_1(x),V_2(x))$. The first one was the total mechanical energy of the system, and $V_2(x)$ was found by trial and error. Analytical conditions were derived under which $V_2$ was decreasing over time in each contact mode except FF. For the FF mode it was proved that $V_2$ may increase, however each episode of free flight is followed by an impact at which $V_2$ drops to a lower value such that the net change of $V_2$ is negative. Those properties could be used to prove Lyapunov stability. In the present paper, these ideas are combined with the results of \cite{posa2015stability}, in order to develop a less conservative, algorithmic stability test in which manual search by trial-and-error is replaced by algorithmic search using a semi-definite program. First, an improved stability theory is developed following the ideas outlined above.  Throughout the rest of the paper the ZOD approximation of the dynamics is considered.

\subsection{The Lyapunov function $V(x)$ may increase temporarily along paths}

Lyapunov's direct method requires that $V(x)$ is a strictly non-increasing function along solution trajectories and along jumps associated with impacts. This condition can be replaced by a weaker condition allowing $V(x)$ to increase temporarily along solution trajectories, provided that it has a non-increasing overall trend. In particular, we will search for a countable set of times $t_i$, $i=1,2,...$, such that the values of $V(x(t_i^+))$ form a strictly non-increasing sequence, furthermore the time-differences $t_{i+1}-t_i$ are sufficiently small to prevent the escape of trajectories from the proximity of the equilibrium state during the time interval $(t_{i},t_{i+1})$. In particular, the discrete set of times $t_{i}$ will be chosen as times of impact. This choice is motivated by the observation, that successful Lyapunov functions are often closely related to kinetic energy, which is known to decrease during every impact. 

In order to develop the idea outlined above into a formal statement, one needs to establish upper bounds of time differences between contact mode transitions. 
%
First we introduce the concept:
\begin{definition}\emph{
A system is called impact-bounded if there exists a strictly increasing function $\zeta(\xi): \mathbb{R}\rightarrow\mathbb{R}$ with $\zeta(0)=0$ such that for any contact mode $\mathcal{M}$ and for any state $x\in\mathcal{S}$ in which mode $\mathcal{M}$ is admissible and consistent, the next contact mode transition or impact along the trajectory through $x$ occurs no later than at time $t\leq\zeta(|x|)$.
\label{def:bounded}
}\end{definition}
Then, two useful results are formulated. The first one ensures that contact mode transitions and impacts occur frequently
\begin{lemma}\emph{
Every system is impact-bounded under the ZOD at an unambiguous equilibrium $x_0=0$.
\label{lem:bounded}
}\end{lemma}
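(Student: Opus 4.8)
The plan is to exploit the fact that, under the ZOD, the dynamics inside a fixed contact mode is quasi-homogeneous: it is invariant under a one–parameter family of rescalings that dilate the state and stretch time. This reduces the claim to a single uniform time bound on a compact ``gauge sphere'' of states, which I then get by compactness. The only thing that can go wrong is a trajectory that stays in one contact mode forever; I would rule this out by rescaling such a trajectory so that its initial point limits onto $x_0$, which produces a non-static mode that is admissible and consistent at $x_0$, contradicting unambiguity.

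In detail, fix a contact mode $\mathcal{M}$. Under the ZOD the acceleration is a constant vector $a_{\mathcal M}$ and the contact forces are constants, so along any trajectory in $\mathcal{M}$ one has $q(t)=q_0+v_0t+\tfrac12 a_{\mathcal M}t^2$ and $v(t)=v_0+a_{\mathcal M}t$; hence every gap function $\gamma_i(q(t))$ is a polynomial in $t$ of degree at most $2$, every $\dot\gamma_i(t)$ and $\dot\sigma_i(t)$ is affine in $t$, and these are the only quantities entering the kinematic and consistency constraints that can change sign along the motion. Mode $\mathcal{M}$ persists exactly as long as a fixed finite family of such polynomials keeps the appropriate (weak or strict) sign, and the next transition or impact occurs at the first instant $T_{\mathcal M}(x)$ at which one of them fails; the coefficients of these polynomials are affine functions of $x=(q_0,v_0)$. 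The map $(q,v,t)\mapsto(\mu^2q,\mu v,\mu t)$, $\mu>0$, carries ZOD trajectories of $\mathcal{M}$ to ZOD trajectories of $\mathcal{M}$ and preserves admissibility and consistency, so $T_{\mathcal M}(\mu^2q,\mu v)=\mu\,T_{\mathcal M}(q,v)$. Using the quasi-homogeneous gauge $\|x\|=\max(|q|^{1/2},|v|)$, which is multiplied by $\mu$ under the rescaling and satisfies $\|x\|\le|x|^{1/2}$ for $|x|\le1$, it suffices to prove that $T^{*}:=\sup\{T_{\mathcal M}(x): \mathcal{M}\text{ a contact mode},\ \|x\|=1,\ \mathcal{M}\text{ admissible and consistent at }x,\ x\text{ not an equilibrium of }\mathcal{M}\}$ is finite; then $\zeta$ may be taken as any strictly increasing function with $\zeta(0)=0$ dominating $T^{*}\xi^{1/2}$ for small $\xi\ge0$, and states $x$ that are equilibria of the pertinent mode are harmless for the intended use of the lemma (the trajectory through them does not move).

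Finiteness of $T^{*}$ is the heart of the proof. There are only $4^c$ modes and the gauge sphere is compact, so if $T^{*}=\infty$ there are a mode $\mathcal{M}$ and states $x_n$ on the gauge sphere with $\mathcal{M}$ admissible and consistent and $T_{\mathcal M}(x_n)\to\infty$; passing to a subsequence, $x_n\to x_{*}$. Since the polynomials defining $T_{\mathcal M}$ depend continuously on the state and keep the correct sign on $[0,T_{\mathcal M}(x_n))$, in the limit they keep the correct (non-strict) sign on all of $[0,\infty)$, so the ZOD trajectory through $x_{*}$ never leaves $\mathcal{M}$. Now dilate this eternal trajectory. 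If $a_{\mathcal M}\ne0$, then looking at it at times $t=\mu s$ with $s$ fixed and $\mu\to0^{+}$ (the rescaling above) the rescaled trajectories converge, uniformly on compact $s$-intervals, to the ZOD trajectory of $\mathcal{M}$ issuing from the state $(0,0)=x_0$; this limiting trajectory is non-static (its velocity is $a_{\mathcal M}s\ne0$), and passing the closed admissibility and consistency inequalities to the limit shows $\mathcal{M}$ is admissible and consistent at $x_0$, so $x_0$ is ambiguous — a contradiction. If $a_{\mathcal M}=0$ the eternal trajectory drifts with constant velocity $v_{*}$; if $\mathcal{M}$ is the all-stick mode the immobilisation hypothesis forces $v_{*}=0$, hence $x_{*}=x_0$ and $\|x_{*}\|=0$, contradicting $\|x_{*}\|=1$, while a proper sub-stick mode with $a_{\mathcal M}=0$ is non-generic (it requires the active contact forces alone to balance the applied load, an overdetermined condition) and is excluded by the standing non-degeneracy assumptions of Section~\ref{sec:ZOD}. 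Either way $T^{*}<\infty$.

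The step I expect to cause trouble is precisely this finiteness argument, for two reasons. First, the rescaling and the $x_n\to x_{*}$ limit are continuity arguments that propagate only the non-strict constraints for free; to conclude that the limiting mode is genuinely \emph{consistent} at $x_0$, and not merely marginally so, one must appeal to the exclusion of marginally consistent and marginally admissible modes made in Section~\ref{sec:ZOD}, so that the strict consistency constraints (such as $\ddot\gamma_i>0$ when $\gamma_i=\dot\gamma_i=0$) survive the limit. Second, the case $a_{\mathcal M}=0$ is the one where the ``unambiguous'' hypothesis alone does not suffice and must be supplemented by the standing assumption that $x_0$ is non-degenerate, in the sense that no contact mode other than all-stick carries a one-parameter family of static configurations through it. Everything else — the explicit root bounds for the affine and quadratic functions, and the scaling bookkeeping — is routine.
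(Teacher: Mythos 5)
Your proof is correct in substance, but it reaches the conclusion by a genuinely different mechanism than the paper. The paper's argument is direct and constructive: for an unambiguous equilibrium it observes that every non-static mode either violates a state-independent (force- or admissibility-related) condition outright, or else drives some velocity-level quantity ($\dot\gamma_i$ or $\dot\sigma_i$) toward zero at a constant rate under the ZOD, so that an admissibility constraint fails after a time proportional to the initial velocity; this yields an explicit $\zeta$. You instead exploit the quasi-homogeneity of the ZOD under $(q,v,t)\mapsto(\mu^2 q,\mu v,\mu t)$ to reduce everything to a gauge sphere, prove finiteness of the worst-case dwell time there by compactness, and rule out an ``eternal'' in-mode trajectory by rescaling it down onto $x_0$ to manufacture a consistent non-static mode, contradicting unambiguity. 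Your route is non-constructive (it produces $\zeta(\xi)\sim T^*\xi^{1/2}$ with an unquantified $T^*$, versus the paper's essentially explicit bound) but it is more systematic and makes visible exactly where the hypotheses enter: you correctly identify that both the passage of strict consistency constraints to the limit and the $a_{\mathcal M}=0$ case require the paper's standing exclusion of marginally admissible/consistent modes — a dependence the paper's own proof also has (its strict inequality $\dot\gamma_i<0$ in the second bullet) but leaves implicit. The residual fuzziness in your degenerate cases (static non-equilibrium configurations persisting in a zero-acceleration mode) is shared by the paper's proof and does not constitute a gap relative to it.
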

\begin{proof}
Ambiguity means that every non-static contact mode $\mathcal{M}$ is inconsistent in the equilibrium state $x_0$. It is straightforward to see that for an unambiguous equilibrium, at least one of the following properties must hold for each mode $\mathcal{M}$ under the ZOD:
\begin{itemize}
\item at least one of the admissibility constraints, or one of the consistency constraints referring to contact forces is violated. In this case, $\mathcal{M}$ may not be realized at all. 
\item at least for one contact $i$, the acceleration of the system in mode $\mathcal{M}$ implies $\dot{\gamma_i}< 0$. 
\item at least for one contact $i$, the sign of $\sigma_i$ dictated by the admissibility constraints and the sign of $\dot{\sigma_i}$ corresponding to the acceleration of the system in mode $\mathcal{M}$ satisfy $\sigma_i\dot{\sigma_i}< 0$ 
\end{itemize}
In the first case, there is no need to investgate impact times in mode $\mathcal{M}$. In the second and third cases,  either the normal velocity $\dot{\gamma_i}$ or the tangential velocity $\sigma_i$ approaches 0 at a constant rate. Hence one of the admissibility constraints referring to the signs $\gamma_i$, $\dot{\gamma_i}$ or  $\sigma_i$ is violated after bounded time.
This finding implies the statement of the theorem. The interested reader may find explicit expression of time bounds for a system with two contact points in \cite{varkonyi2017lyapunov}.
\end{proof}

%
The second statement establishes bounds for trajectories completed in short time intervals: 
\begin{lemma}\emph{
There exists a strictly increasing function $\eta(\xi): \mathbb{R}\rightarrow\mathbb{R}$ with $\eta(0)=0$ 
such that for any trajectory $x(t)$ under the ZOD, which includes no impacts  within a time interval $(t_1,t_2)$ and $|x(t_1)|<1$ the following bound is satisfied:
$|x(t_2)|
\leq |x(t_1)|+\eta(t_2-t_1)$
under the ZOD.
\label{lem:bound-noimpact}
}\end{lemma}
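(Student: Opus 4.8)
The plan is to bound the growth of $|x(t)|$ over an impact-free episode by exploiting the structure of the ZOD: within a single contact mode the velocity $v$ evolves with a \emph{constant} acceleration $a_{\mathcal M}$ (depending on the mode $\mathcal M$ but not on the state), and $\dot q = v$. First I would observe that the set of contact modes is finite, so there is a uniform bound $a^* = \max_{\mathcal M}|a_{\mathcal M}|$ on the magnitude of the acceleration over \emph{any} mode. An impact-free interval $(t_1,t_2)$ may still contain finitely many contact mode transitions (liftoff, slip–stick, slip reversal), but across all of them the velocity is continuous, so on the whole interval $v(t)$ is a continuous, piecewise-linear function with slope bounded by $a^*$. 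Hence $|v(t)| \le |v(t_1)| + a^*(t-t_1)$ for $t\in[t_1,t_2]$, and integrating, $|q(t) - q(t_1)| \le \int_{t_1}^{t}|v(s)|\,ds \le |v(t_1)|(t-t_1) + \tfrac12 a^*(t-t_1)^2$.

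Next I would combine these two bounds into a bound on $|x(t_2)| = \sqrt{|q(t_2)|^2 + |v(t_2)|^2}$ (or, more conveniently, work with the equivalent norm $|x| \sim |q| + |v|$, adjusting constants at the end). Writing $\tau = t_2 - t_1$ and using $|q(t_1)| \le |x(t_1)| < 1$, $|v(t_1)| \le |x(t_1)| < 1$, we get
\begin{align}
|q(t_2)| + |v(t_2)|
&\le |q(t_1)| + |v(t_1)|\tau + \tfrac12 a^* \tau^2 + |v(t_1)| + a^*\tau \notag\\
&\le |x(t_1)| + |x(t_1)|\tau + a^*\tau + \tfrac12 a^*\tau^2 . \notag
\end{align}
Using $|x(t_1)| < 1$ on the term $|x(t_1)|\tau$ to replace it by $\tau$, the right-hand side becomes $|x(t_1)| + \bigl(1 + a^*\bigr)\tau + \tfrac12 a^*\tau^2$, so that $\eta(\xi) := (1+a^*)\xi + \tfrac12 a^*\xi^2$ (rescaled by the constant relating the two norms) is a strictly increasing function with $\eta(0)=0$ that does the job. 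One small technical point to address is why the mode transitions inside $(t_1,t_2)$ are finite in number (or at least why the argument survives infinitely many of them): since $v$ stays bounded and each mode's acceleration is constant, a Zeno accumulation of transitions with no impact would still leave $v$ Lipschitz on the closure of the interval, so the integral bound on $|q|$ is unaffected; I would make this remark explicitly rather than claim finiteness.

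The main obstacle is not the calculus but the bookkeeping around what ``no impacts'' permits: I must be careful that contact mode transitions (which \emph{are} allowed in the interval) do not break the continuity of $v$ — this is exactly the content of the earlier discussion that transitions give ``continuous but non-smooth velocity functions'' — and that the piecewise-constant acceleration never blows up, which holds because the mode set is finite and, under the ZOD, each mode's acceleration is a fixed vector computed from $H(0)$, $C(0,0)$, $J_N(0)$, $J_T(0)$. A secondary subtlety is the role of the hypothesis $|x(t_1)|<1$: it is used only to linearize the quadratic cross-term $|x(t_1)|\tau$, and I would flag that any fixed bound would do with a correspondingly rescaled $\eta$. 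With these points addressed, the statement follows directly.
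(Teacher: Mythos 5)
Your proposal is correct and follows essentially the same route as the paper's proof: a uniform bound $a$ on the (constant) mode accelerations over the finite set of contact modes, the kinematic bounds $|v(t_2)-v(t_1)|\leq a\tau$ and $|q(t_2)-q(t_1)|\leq |v(t_1)|\tau+\tfrac12 a\tau^2$, and the hypothesis $|x(t_1)|<1$ to absorb $|v(t_1)|$ into a constant, yielding $\eta(\xi)=(1+a)\xi+\tfrac12 a\xi^2$ up to norm constants. Your explicit remarks on velocity continuity across mode transitions and on possible Zeno accumulations of non-impact transitions are points the paper leaves implicit, but they do not change the argument.
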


\begin{proof}
In each contact mode $\mathcal{M}$ the system has a constant acceleration $\dot{v}_\mathcal{M}$ under the ZOD. Let $a=\max_\mathcal{M}|\dot{v}_\mathcal{M}|$ denote the largest absolute value among those accelerations.

By using kinematics of constantly accelerating motion, the following bounds are obtained:
\begin{align}
|v(t_2)-v(t_1)|&\leq a(t_2-t_1)
\\
|q(t_2)-q(t_1)|&\leq |v(t_1)|(t_2-t_1)+\frac{1}{2}a(t_2-t_1)^2\\
&\leq 1\cdot (t_2-t_1)+\frac{1}{2}a(t_2-t_1)^2
\end{align}
which implies
$$
|x(t_2)-x_0|\leq |x(t_1)-x_0|+(a+1)(t_2-t_1)+\frac{1}{2}a(t_2-t_1)^2.
$$
proving the statement.
\end{proof}

As we will see in Sec. \ref{sec:extended}.C, these results together enable us to prove stability by constructing a function $V(x)$, which is not monotonic, but whose values evaluated at times of impacts and contact mode transitions form a decreasing sequence.

%
%

\subsection{Stability certificates via multiple Lyapunov functions }

The application of the original stability condition relies on finding a Lyapunov function $V(x)$ with the properties specified in Theorem \ref{thm:Lyapunov1}. Stability can also be verified if we find a finite set of functions $V_i(x)$ ($i=1,2,...,n$), none of which satisfies condition 3a) of Theorem \ref{thm:Lyapunov1} provided that a related condition 
\begin{equation}
\max_i V_i(x)\geq\alpha(|x-x_0|)
\label{eq:V_i}
\end{equation} 
is satisfied, furthermore each one of the functions satisfies condition 3b).

This extension may appear somewhat superfluous, because the existence of a set of function with these properties implies that the non-smooth function defined by $V(x)=\max_iV_i(x)$ satisfies the original conditions 3a-b) of the Theorem. Nevertheless stability tests using SOS programming can benefit from such an extension, as the upper envelope of two polynomial functions is usually not a polynomial. Hence the use of several Lyapunov functions makes a wider set of Lyapunov candidates available for the solver.  

Notably, a second Lyapunov function can be used to inform the solver about the dissipative nature of frictional dynamics. In particular one can choose $V_1(x)$ as the total mechanical energy of the system and search for a second function $V_2(x)$, which together satisfy \eqref{eq:V_i}. 

\subsection{An improved stability condition}

Now we are ready to state the main result of the paper:
\begin{theorem}\emph{ \label{thm:Lyapunov3} Let $x_0=(0,0)$ be an unambiguous equilibrium state. If 
there exists
\begin{enumerate}
\item a positive value $h$
\item a continuous, strictly increasing scalar functions $\alpha$ with $\alpha(0)=0$
\item a finite set of continuously differentiable functions $V_i : \mathbb{R}^{2n}\rightarrow \mathbb{R}$ ($i=1,2,...,\upsilon$) such that for any initial state $x\in \mathcal{B}_h \cap \textrm{cl}(\mathcal{A})$, 
\begin{enumerate}
\item 
$\max_iV_i(x)\geq\alpha(|x-x_0|)$
\item $V(x)\geq V(x^*)$ where $x^*$ denotes the next point of contact mode transition  or the next post-impact state along the the trajectory through $x$, whichever occurs earlier.
\end{enumerate}
\end{enumerate}
then $x_0$ is stable in the sense of Lyapunov.
}\end{theorem}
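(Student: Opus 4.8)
The plan is to mimic the structure of the proof of Theorem~\ref{thm:Lyapunov1}, but to work with the envelope function $V(x)=\max_i V_i(x)$ and to control the trajectory between consecutive contact-mode transitions / impacts using Lemmas~\ref{lem:bounded} and~\ref{lem:bound-noimpact}. Note first that $V$ is continuous (as a finite max of continuous functions), satisfies $V(x)\ge\alpha(|x-x_0|)$ by condition 3a), and, by condition 3b), is non-increasing when evaluated at the successive ``event times'' $t_0<t_1<t_2<\dots$ along a trajectory, where the events are contact-mode transitions and post-impact states. So the discrete sequence $V(x(t_k^+))$ is non-increasing; the only thing lost relative to the classical argument is monotonicity of $V$ \emph{between} events, which is exactly what Lemma~\ref{lem:bound-noimpact} is there to repair.

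First I would fix $\epsilon>0$, assume WLOG $\epsilon<1$ and $\epsilon<h$, and then \emph{shrink} the target radius to absorb the between-event excursion: using Lemma~\ref{lem:bounded}, pick $\epsilon_1>0$ small enough that $\zeta(\epsilon_1)$ is small, and then pick $\epsilon_2\le\epsilon_1$ with $\epsilon_2+\eta(\zeta(\epsilon_1))<\epsilon$ — this is possible because $\eta(0)=0$ and $\eta$ is continuous/increasing, so the worst-case drift $\eta(t_{k+1}-t_k)$ over an inter-event interval of length at most $\zeta(\epsilon_1)$ can be made $<\epsilon-\epsilon_2$. Next, exactly as in the proof of Theorem~\ref{thm:Lyapunov1}, set $c=\alpha(\epsilon_2)$; then $V\ge c$ on the boundary of $\mathcal{B}_{\epsilon_2}$, so the sublevel set $\Omega_c=\{x:V(x)\le c\}$ has a connected component containing $x_0$ in its interior and contained in $\mathcal{B}_{\epsilon_2}$. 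Finally choose $\delta>0$ with $\mathcal{B}_\delta\subset(\Omega_c\cup\mathcal{A}^c)$.

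Now the core claim: if $|x(0)-x_0|<\delta$ then $|x(t)-x_0|<\epsilon$ for all $t\ge0$. I would argue by induction over the event times $t_k$ that $x(t_k^+)\in\Omega_c$, hence $x(t_k^+)\in\mathcal{B}_{\epsilon_2}$: the base case is $x(0)\in\mathcal{B}_\delta\subset\Omega_c$ (it lies in $\mathcal{A}$, being a kinematically admissible initial state, so it cannot be in $\mathcal{A}^c$), and the inductive step is immediate from condition 3b), which gives $V(x(t_{k+1}^+))\le V(x(t_k^+))\le c$. Between $t_k$ and $t_{k+1}$ there are no impacts by definition of the event sequence, and by Lemma~\ref{lem:bounded} we have $t_{k+1}-t_k\le\zeta(|x(t_k^+)|)\le\zeta(\epsilon_2)\le\zeta(\epsilon_1)$; since $|x(t_k^+)|<\epsilon_2<1$, Lemma~\ref{lem:bound-noimpact} yields $|x(t)-x_0|\le|x(t_k^+)-x_0|+\eta(t-t_k)<\epsilon_2+\eta(\zeta(\epsilon_1))<\epsilon$ for all $t\in(t_k,t_{k+1})$. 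Together with $x(t_k^+)\in\mathcal{B}_{\epsilon_2}\subset\mathcal{B}_\epsilon$ this covers all $t$, proving stability.

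The main obstacle — and the only place requiring care beyond bookkeeping — is the interplay between the two lemmas and the radius bookkeeping: one must make sure the argument of $\zeta$ is controlled by the \emph{current} (small) state norm rather than by $\epsilon$, and that the ``next contact mode transition or post-impact state'' in condition 3b) coincides with the events $t_k$ used with Lemma~\ref{lem:bounded}, so that the no-impact hypothesis of Lemma~\ref{lem:bound-noimpact} genuinely holds on each $(t_k,t_{k+1})$. One should also note the edge cases: if a trajectory has only finitely many events (eventually rests in a single consistent mode forever, or reaches the static equilibrium), or accumulates infinitely many events in finite time (Zeno), the same estimates still apply on each interval and the supremum over $t$ is still bounded by $\epsilon$, so no separate treatment is needed. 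Everything else is a routine transcription of the classical sublevel-set / positive-invariance argument already spelled out for Theorem~\ref{thm:Lyapunov1}.
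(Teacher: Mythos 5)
Your proposal is correct and follows essentially the same route as the paper's proof: a sublevel set of the envelope $\max_i V_i$, Lemma~\ref{lem:bounded} to bound the time to the next event, Lemma~\ref{lem:bound-noimpact} to bound the excursion between events, and induction over the event times using condition 3b). If anything, your bookkeeping is slightly more careful than the paper's: you maintain the invariant $x(t_k^+)\in\Omega_c\subset\mathcal{B}_{\epsilon_2}$ (which is what condition 3b) actually delivers), whereas the paper's recursion asserts $|x(t_2^+)|\leq|x(t_1)|$, i.e.\ a return to $\mathcal{B}_\delta$, which does not literally follow from the decrease of $V$ alone.
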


Clearly, Theorem \ref{thm:Lyapunov3} is a generalization of Theorem \ref{thm:Lyapunov1}. The latter is recovered if one requires $\upsilon=1$, and condition 3) of Theorem \ref{thm:Lyapunov3} is considered in the limit $t_2\searrow t_1$.

The proof of Theorem \ref{thm:Lyapunov3} follows the logical steps of Theorem \ref{thm:Lyapunov1}. As it has been pointed out, the relaxed condition 3a) in the statement does not require any significant change in the proof, as the non-smooth function $\max_iV_i(x)$ satisfies the requirements of the original theorem. The relaxed condition 3b) requires some adaptation of the proof.

\begin{proof}
Consider an arbitrary $\epsilon>0$. Define $c$ as $c=\alpha(\min(\epsilon,h))$. Then $V(x)\geq c$ along the boundary of $\mathcal{B}_{\min(h,\epsilon)}$ hence $V(x)$ has a closed, connected sublevel set  $\Omega_c:V(x)\leq c$ such that $\Omega_c$ contains $x_0$ in its interior, furthermore $\Omega_c\subset \mathcal{B}_{\min(h,\epsilon)}$. Next we can find a scalar $\beta>0$ such that $\mathcal{B}_\beta\subset(\Omega_c\cup\mathcal{A}^c)$ where $^c$ means complement.

Since $\eta$ and $\zeta$ are strictly monotonic functions, there is a unique positive value of $\delta$ satisfying the equation 
\begin{equation}
\delta+\eta(\zeta(\delta)))=\beta
\label{eq:deltabeta}
\end{equation}
If the initial point of a trajectory at time $t_1$ is in $\mathcal{B}_\delta$, then by Lemma \ref{lem:bounded} the next contact mode switch or impact occurs no later than at time $t_2=t_1+\zeta(\delta))$. According to Lemma \ref{lem:bound-noimpact}, and equation \eqref{eq:deltabeta}, the state of the object remains within the bound given by $|x(t)-x_0| \leq \beta$ during the time interval $t\in (t_1,t_2)$, which is a subset of $\mathcal{B}_\epsilon$. By using Condition 3b), we also know that $|x(t_2^+)| \leq |x(t_1)|$, i.e. the state $|x(t_2^+)$ is also inside $\mathcal{B}_\delta$ just like the initial state $x(t_1)$. This observation allows the recursive application of the previous arguments for each phase of motion free of impacts and contact mode transitions. This way we arrive to the final conclusion that the state of the systems remains within $\mathcal{B}_\epsilon$ at all times,  implying Lyapunov stability.  
\end{proof}

\section{STABILITY OF A PLANAR BIPED}
\label{sec:example}
In this section we apply the extended stability test to a planar model of a rigid body with 2 contact points, which we will refer to in the sequel as 'biped'. 

\subsection{Implementation of the stability condition}



The planar biped has 16 contact modes, which are two-letter words composed of the set of letters $\lbrace P,N,F,S\rbrace$. Among these, exactly 10 are kinematically admissible in some parts of state space. For example if $\cos\phi_1,\cos\phi_2>0$. then modes $PN$, $NP$, $PS$, $SP$, $SN$, $NS$ are kinematically inadmissible for all states of the object.
The remaining 10 modes may be admissible or inadmissible depending on state. For each of them, the values of contact forces and instantaneous accelerations of the system are state-independent under the ZOD, and thus they can be calculated in advance. For example the $FF$ contact mode there are no contact forces and the acceleration is given by the value of $C$, see equation \eqref{eq:C}.


After the preliminary steps outlined above, the conditions of Theorem \ref{thm:Lyapunov3} are formulated as an SOS program and tested. Such an implementation requires
\begin{itemize}
\item specification of $h$ in condition 1). We use $h=1$.
\item Lyapunov function candidates. We choose $V_1(x)$ to be total mechanical energy that is
\begin{equation}
V_1(x)=\frac{1}{2}(u^2+v^2+\omega^2+z\cos\alpha-x\sin\alpha+T\theta) \label{eq:V1}
\end{equation}
$V_2(x)$ is chosen as 
\begin{equation}
V_2(x) = x + P_1^D(\phi,z,u,v,\omega)
\end{equation}
where $P_1^D$ denotes a polynomial containing all terms with degrees within the interval $(1,D)$. $V_2$ does not have constant terms due to the required property $V(x_0)=0$. We chose $D=3$ as larger values of $D$ result in too many unknown coefficients, and high computational cost. In addition, the nonlinear terms in the variable $x$ are omitted in order to reduce the number of unknown coefficients. This choice is motivated by the fact that $x$ is a cyclic coordinate of the ZOD dynamics. With the restrictions above $V_2(x)$ has altogether 55 unspecified coefficients to be determined.
\item existence of an appropriate function $\alpha(\xi)$ satisfying condition 3a). In the Appendix, we prove that such a function exists if 
\begin{align}
V_2(x)&\geq x+u^4+v^4+\omega^4 \label{eq:V2bound} \\
0<\alpha&<\pi/2 \label{eq:alphalim}\\
L_1L_2&<0 \label{eq:L1L2}
\end{align}
In what follows, we restrict our attention to systems satisfying \eqref{eq:alphalim}, \eqref{eq:L1L2}. The polynomial inequality \eqref{eq:V2bound} is added to the non-negativity conditions to be fulfilled. 
\item Implementation of condition 3b) for all possible forms of motion. The function $V_1$ is non-increasing, hence all we need is to prove monotonicity of $V_2$. If $e=0$, then the motion does not have free flight phase. In this case, we use the following conditions:
\begin{itemize}
\item $\frac{d}{dt}V_2(x(t)) \leq 0$ for all of the non-static contact modes involving an active contacts $\lbrace PF,SF,NF,FP,FS,FN,PP,NN \rbrace$, and for all states $x\in \mathcal{B}_h \cap \textrm{cl}(\mathcal{A})$ where that contact mode is admissible and consistent
\item  $V_2(M(x(t^-)))-V_2(x(t^-))\leq 0$ for all possible pre-impact states $x(t^-)\in \mathcal{B}_h \cap \textrm{cl}(\mathcal{A})$ where $M$ is the impact map.
\end{itemize} 
If $e>0$ then free-flight is also possible. Free-flight is always followed by an impact, and we use the following condition: 
\begin{itemize}
\item  $V(M(x(t^-)))-V(t-\Delta)\leq 0$ for all possible pre-impact states $x(t^-)$ and for all possible durations $\Delta$ of a free-flight phase immediately before that impact. As before, $M$ means the impact map.
\end{itemize} 
\end{itemize}
All of these conditions as well as the corresponding conditions of admissibility and consistency are given in closed form in the Appendix. All of them are polynomial equations or inequalities in an appropriate set of variables (which includes state variables as well as some additional variables such as $\Delta$ in the case of free flight). Hence the whole set of conditions can be recast as an SOS programming task as demonstrated in Subsection \ref{sec:SOS}.

The extended stability test has been implemented in Yalmip optimization toolbox. In the following section, some test results are presented.

\subsection{Regions of stability in parameter space}

First, the stability test is tested and verified using the example of a biped on a slope of angle $30^\circ$ (that is $\phi_1=\phi_2=0$, $\alpha=30^\circ$) subject to gravitational load ($T=0$, $F=1$) with inelastic impacts ($e_1=e_2=0$). The friction coefficient $\mu_2$ is kept fixed while $\mu_1$, and the distance $2d$ between the two contact points are varied systematically. The positions of the contact points are chosen as $h=1$, $L_1=h\tan\alpha+d$, $L_2=h\tan\alpha-d$. Note that this choice of $L_1$, and $L_2$ ensures that the object does not topple in $SF$ or $FS$ mode in a static state. 

This system has been investigated previously by several works. If $\mu_2>\tan\alpha$ then the object is in frictional equilibrium for all values of $\mu_1$ and $d$. Nevertheless the equilibrium is trivially unstable due to ambiguity \cite{or2008hybrid-I} if the following two conditions are satisfied:
\begin{align}
d&<h\tan\alpha \\
\mu_1&<\tan\alpha-d/h \label{eq:mu1lim}
\end{align}
In this case, the PF mode is consistent in addition to SS, i.e. a microscopic perturbation of the equilibrium state may initiate downhill slip motion at point 1 accompanied by separation of point 2.

Even when $\mu_1$ is above the limit \eqref{eq:mu1lim}, the object may be unstable due to self-excited inverse chatter motion as first pointed out by \cite{varkonyi2012lyapunov}. The exact range of parameter values corresponding to instability has been determined by semi-analytical investigation of an appropriately defined Poincar\'e map \cite{varkonyi2017lyapunov,or2021experimental}. The newly proposed Lyapunov stability test is able to verify stability in a significant portion of the stable region of model parameters (Fig. \ref{fig:chart1}). This test indicates that the new test is conservative, but it is successful in some highly non-trivial cases. 

The exact conditions of Lyapunov stability are available for the previous example, which possesses two point contacts with ideally inelastic impacts unknown. In fact, to the author's best knowledge this is the only known case of rigid (multi-)body systems with multiple, unilateral frictional contacts, where conditions of Lyapunov stability are known. 

As second example we choose the same model with slope angle $\alpha=20^\circ$, partially elastic impacts $e_1=e_2=0.1$ and equal coefficients of friction $\mu_1=\mu_2$ at the two contact points. 

This system is in frictional equilibrium if $\mu_1,\mu_2\geq \tan20^\circ\approx 0.36$ and the equilibrium is unambiguous in all cases.
The exact parameter range implying Lyapunov stability is unknown, however this example has been investigated analytically by \cite{varkonyi2012lyapunov} with the aid of a manually constructed Lyapunov-type function chosen based on physical intuition. That function could verify stability in a range of model parameters illustrated by Fig. \ref{fig:chart2}.

We test the stability using the newly proposed  SOS program for various combinations of the parameters $d$ and $\mu_1=\mu_2$. The test verified stability in a significantly larger range than the region of stability, verified by \cite{varkonyi2012lyapunov}.


\begin{figure}[h]
\centering
\includegraphics[width=1.1\columnwidth]{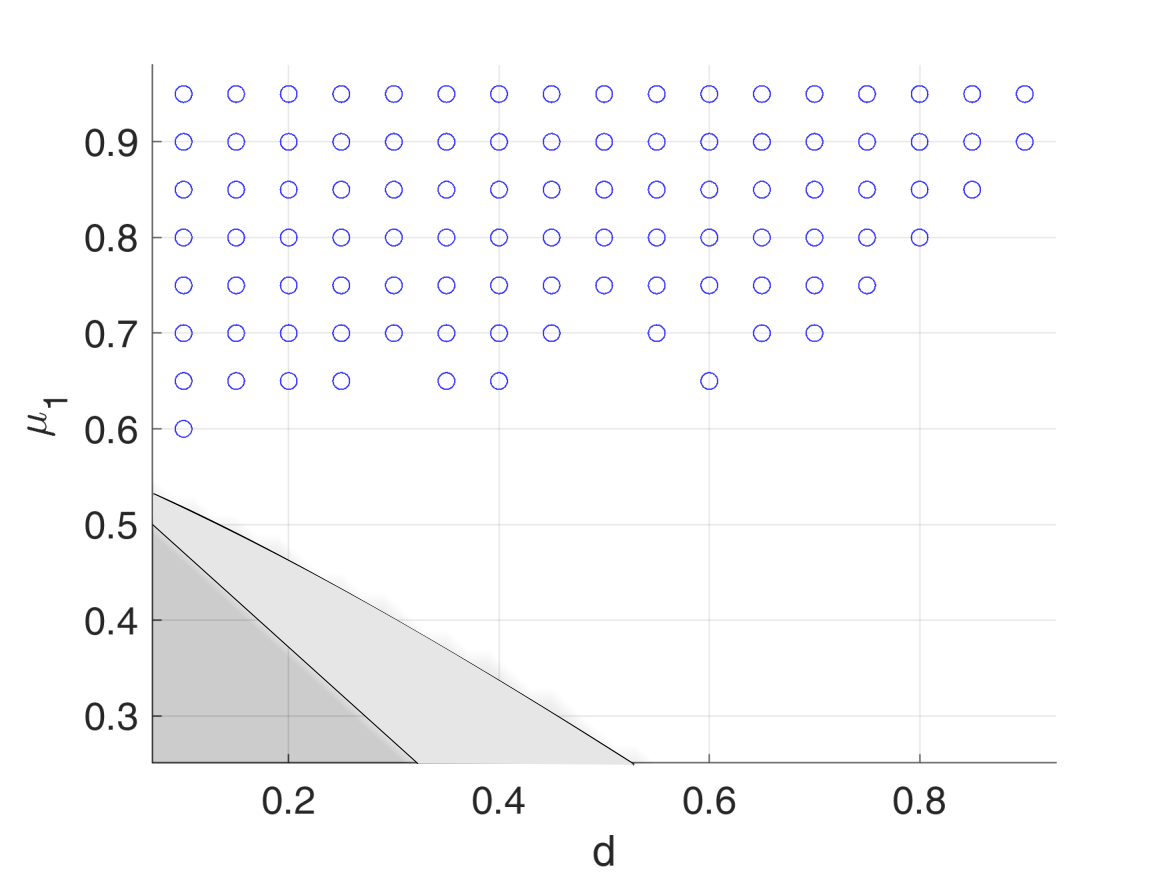}
\caption{Equilibrium and stability charts of a planar biped with $e_1=e_2=0$,  $T=0$, $F=1$, $\alpha=30^\circ$, $h=1$, $\phi_1=\phi_2=0$, and $L_1,L_2=h\tan\alpha\pm d$, $\mu_2=2$ as a function of $\mu_1$ and $d$. Background colors represent analytical results: unstable equilibrium due to ambiguity (dark), unstable equilibrium due to inverse chatter (middle) and stable equilibrium (white). Circles indicate stability region verified by the SOS program.}
\label{fig:chart1}
\end{figure}
\begin{figure}[h]
\centering
\includegraphics[width=1.1\columnwidth]{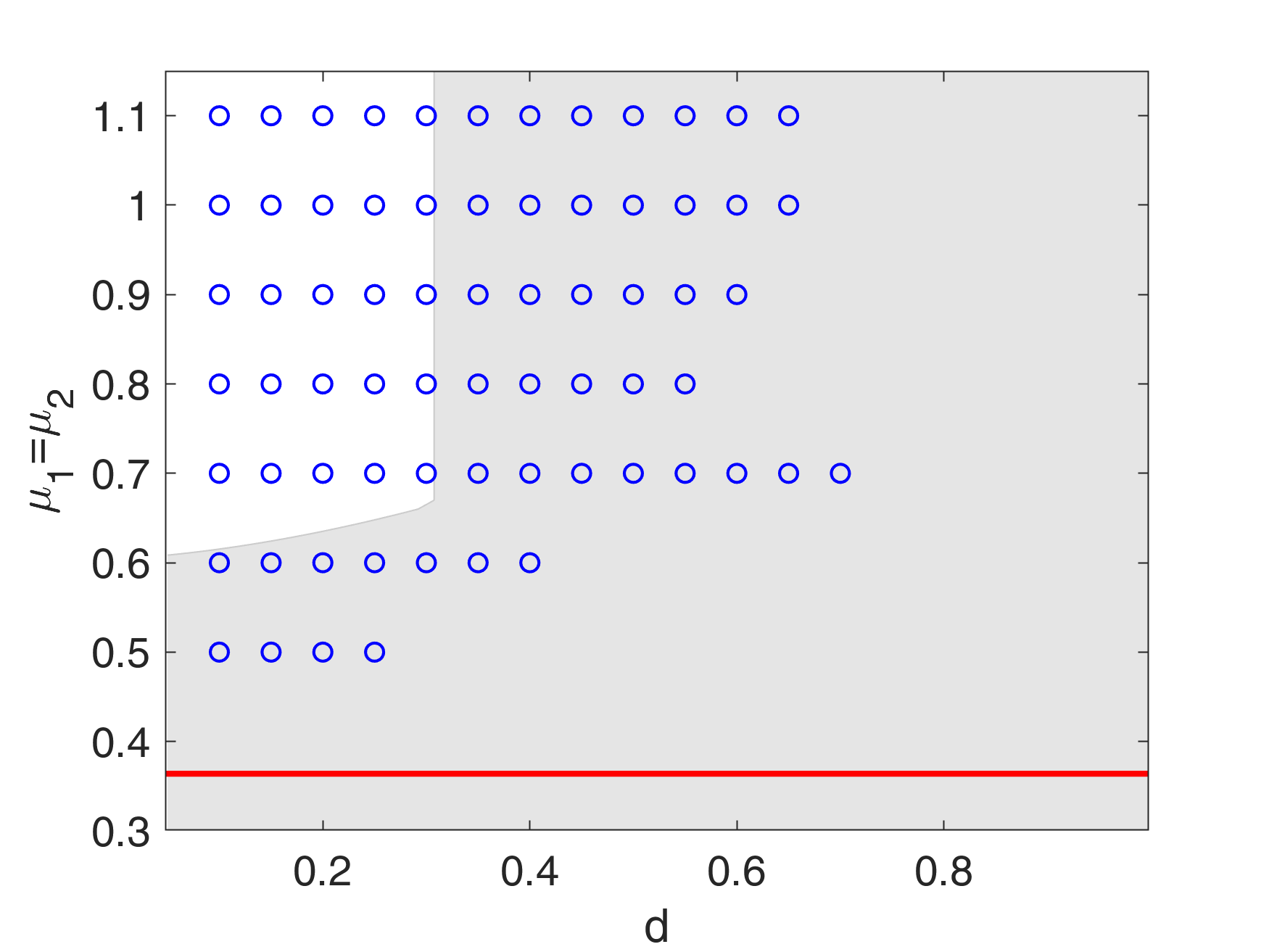}
\caption{Equilibrium and stability charts of a planar biped with $e_1=e_2=0.1$, $T=0$, $F=1$, $\alpha=20^\circ$, $h=1$, $\phi_1=\phi_2=0$, and $L_1,L_2=h\tan\alpha\pm d$, as a function of $\mu_1=\mu_2$ and $d$. Solid line indicates minimum value of friction coefficient for frictional equilibrium. White background  corresponds to region of stability succesfully  verified by \cite{varkonyi2012lyapunov}. Circles indicate larger stability region verified by the SOS program.}
\label{fig:chart2}
\end{figure}


\section{CONCLUSIONS}

Lyapunov stability analysis of rigid multibody systems with unilateral, frictional contacts is a fundamental question of robotics and object manipulation. Recent theoretical and experimental works demonstrated the existence of various types of instability in response to small perturbations initiating liftoff at the contacts. Despite initial attempts to predict stability of some simple model systems, general stability tests remain unavailable. The complexity of contact-induced rigid body dynamics makes analytical investigation infeasible even for seemingly simple systems. Algorithmic stability tests have great potential in more complex cases. An important step in this respect was made by \cite{posa2015stability}, where the compatibility of SOS programming with the special version of Lyapunov's direct method applicable to hybrid dynamical systems induced by rigid contact was recognized. Despite the success of this method in several related problems, it fails to verify Lyapunov stability of simple multi-contact test problems. In the present work, several improvements of the method have been proposed, which enabled successful verification of Lyapunov stability in those cases. It has been found for a particular example that the algorithmic test outcompeted a previously proposed manual stability test based on physical intuition. At the same time, the stability test appears to be conservative in those cases where the exact condition of stability are known.

The author foresees that future developments will radically extend the applicability of algorithmic stability tests to contact-induced dynamics. A fundamental challenge to be solved is reduction of high complexity of the algorithm (currently exponential in the number of contact points), which emanates from the complexity of the hybrid dynamics approach to contact problems. 

\section*{APPENDIX}

\subsection{Verification of condition 3a) in Theorem \ref{thm:Lyapunov3}}

The non-penetration constraints under the ZOD \eqref{eq:gammabipedZOD} now yields $\gamma_i = z_i+L_i\theta\geq 0$ for $i=1,2$ since $\phi_i=0$. This relation is combined  with \eqref{eq:L1L2} to obtain the bound
\begin{align}
z\geq \min_i|L_i|\cdot |\theta| \label{eq:|phi|bound}
\end{align}
which will be used in the sequel.

Now we separate state space to two regions along the plane $y = \frac{1}{2}\cot\alpha z$ and the following bounds are established.

If $y \geq \frac{1}{2}\cot\alpha z$ then this inequality along with \eqref{eq:V2bound}, \eqref{eq:|phi|bound} implies
\begin{align}
V_2(x)&\geq y = |y|\\
      &\geq \frac{1}{2}\cot\alpha\cdot |z|\\
      &\geq \frac{1}{2}\cot\alpha\min_i|L_i|\cdot|\theta| \\
V_2(x)&\geq u^4,v^4,\omega^4
\end{align}
On the other hand, if If $y \leq \frac{1}{2}\cot\alpha z$ then this inequality and  \eqref{eq:V1}, \eqref{eq:|phi|bound} imply
\begin{align}
V_1(x)\geq z\cos\alpha-y\sin\alpha &\geq\frac{1}{2}\cos\alpha \cdot z\\
 &\geq \frac{1}{2}\cos\alpha \min_i|L_i|\cdot|\theta|\\
    V_1(x)  \geq z\cos\alpha-y\sin\alpha &\geq \sin\alpha \cdot |y|\\
V_1(x)\geq \frac{1}{2}u^2,\frac{1}{2}v^2,\frac{1}{2}\omega^2
\end{align}

In both cases, we have found for each one of the state variables $\sigma\in\lbrace{y,z,\theta,u,v,\omega\rbrace}$ an associated strictly increasing function $\alpha_\sigma$ satisfying $\alpha_\sigma(0)=0$ such that $\max_iV_i(x)\geq \alpha_\sigma(|\sigma|)$. It is straightforward to show that his property implies condition 3a.  

\subsection{State-dependent conditions associated with contact modes}
We present state-dependent admissibility and consistency condition of each contact mode. All of these conditions are formulated as polynomial inequalities in an appropriate set of variables.

\textbf{PF, NF, FP, FN modes:}
Let $i$ denote the active contact, and $j=3-i$ the inactive one. The normal contact force is determined from the consistency constraint $\ddot{\gamma}_i=0$ using the equation of motion \eqref{eq: manipulator} and the kinematic relations given in Sec. \ref{sec:ZOD}:
$$
\lambda_{\pm,i}=\frac{\cos\alpha}{1+L_i^2\mp\mu_iL_iH}
$$
where the $\pm$ signs should be understood as $+$ for slip in the positive direction and $-$ for negative slip. If $\lambda_{\pm,i}<0$ then the corresponding mode is not consistent in any state, hence the condition given below can be disregarded. In the opposite case, we express
the time derivative of the vector $x$ of state veriables as
\begin{align}
\dot x_{\pm,i} &= (u,v,\omega,\dot u,\dot v,\dot\omega)\\
&=
 (u,v,\omega,
\sin\alpha\mp\mu_i\lambda_{\pm,i},
-\cos\alpha+\lambda_{\pm,i},
(L_i\mp\mu_iH)\lambda_{\pm,i}
)
\label{eq:stabconslip}
\end{align}
Finally, condition 3b) is given by
\begin{align}
-J_V(x)\dot x_{\pm,i}\geq 0 &
\textrm{ whenever}
&\gamma_i(q)=0\\
&& \dot \gamma_i(q)=0\\
&&\gamma_j(q)\geq 0\\
&&\dot\sigma_i(q,v)
\begin{array}{c}\geq\\ \leq\end{array} 
0\\
&&1-x^Tx \geq 0
\end{align}
where $J_V(x)$ is the Jacobian of the function $V(x)$, which can be expressed in terms of the 55 unknown coefficients of $V(x)$.

\textbf{SF, FS modes:}
Let $\lambda$ and $\lambda_t$ denote normal and tangential contact forces at the sticking contact $i$, and let $j$ denote the other contact. Then, the contact forces can be determined using the kinematic constraints $\ddot\gamma_i=\dot\sigma_i=0$:
$$
[\lambda,\lambda_t]^T=-P[-\cos\alpha,\sin\alpha]^T
$$
with 
$$
P=\left[\begin{array}{cc}
1+L_i^2&L_iH\\
L_iH&1+H^2
\end{array}\right]
$$
If $\lambda<0$ or $|\lambda_t|>\mu_i\lambda$, then the contact mode under investigation can be omited otherwise the time derivative of $x$ is
\begin{align}
\dot x_{S,i} &= 
 (u,v,\omega,
\sin\alpha+\lambda_t,
-\cos\alpha+\lambda,
L_i\lambda+H\lambda_t
\end{align}
and the stability conditions are formulated as
\begin{align}
-J_V(x)\dot x_{S,i}\geq 0 &
\textrm{ whenever}
&\gamma_i(q)=0\\
&& \dot \gamma_i(q)= 0\\
&&\gamma_j(q)\geq 0\\
&&1-x^Tx \geq 0
\label{eq:stabconstick}
\end{align}
We note that $\dot x_{S,i}$ is a convex combination of $\dot x_{+,i}$ and $\dot x_{-,i}$, i.e. $\dot x_{S,i} = \kappa \dot x_{+,i} + (1-\kappa) \dot x_{-,i}$ for some $0\leq \kappa\leq 1$. Hence the conditions \eqref{eq:stabconslip} for positive and negative slip together imply \eqref{eq:stabconstick}.

\textbf{PP, NN modes:}
In this case, the two normal contact forces are  determined by the constraints $\ddot \gamma_1=\ddot \gamma_2=0$:
$$
[\lambda_1,\lambda_2]^T=-P^{-1}[\cos\alpha,\cos\alpha]^T
$$
where
$$
P=\left[\begin{array}{cc}
1+L_1^2\mp\mu_1HL_1&1+L_1L_2\mp\mu_2HL_1\\
1+L_1L_2\mp\mu_1HL_2&1+L_2^2\mp\mu_2HL_2
\end{array}\right]
$$
If any of the two contact force values is negative, than the conditions associated with the contact mode under investigation can be disregarded. In the opposite case, we have 
\begin{align}
\dot x_{\pm} 
&=
 (u,v,\omega,
\sin\alpha\mp\sum_i\mu_i\lambda_i,
-\cos\alpha+\sum_i\lambda_i,
\sum_i(L_i\mp\mu_iH)\lambda_i)
\end{align}
and the stability conditions are formulated as
\begin{align}
-J_V(x)\dot x_{\pm} \geq 0 &
\textrm{ whenever}
&\gamma_1(q)=\gamma_2(q)=0\\
&& \dot \gamma_i(q)=\dot\gamma_2 = 0\\
&&\sigma_1(q,v)
\begin{array}{c}\geq\\ \leq\end{array} 
0\\
&&1-x^Tx \geq 0
\end{align}

\textbf{Impacts:} 
The impact maps of sticking and slipping impacts have been given in closed form in the main text. Instead of treating the three types of impacts separately, slipping and sticking impacts  are addressed now in a common framework as follows. Let $i$ be the contact point undergoing an impact. Let $\Lambda_{N,i}$ and $\Lambda_{T,i}$ denote net impulses in the normal and tangential directions during that impact.

Then $\Lambda_{N,i}$ can be expressed in terms of the pre-impact state and of $\Lambda_{T,i}$ by using the condition $\dot\gamma_i(x(t^+))=-e\dot\gamma_i(x(t^-))$ and \eqref{eq:manipulator2}:
$$
\Lambda_{N,i}=\frac{-(1+e)(v_z(t^-)+l_i\omega(t^-)) + \Lambda_{T,i} hl_i}{1+l_i^2}
$$
Then, the post-impact state becomes
\begin{align}
x^+ &= x^- + 
 (0,0,0,\Lambda_{T,i},\Lambda_{N,i},\Lambda_{T,i} H+\Lambda_{N,i} L_i)
\end{align}
and the stability conditions are formulated as
\begin{align}
V(x^-)-V(x^+) \geq 0 &
\textrm{ whenever}
&\gamma_i(q)=0\\
&& \dot \gamma_i(x^-)\leq 0\\
&&\gamma_j(q)\geq 0\\
&&(u^-+\omega^+H)\lambda_t\leq 0\\
&&\mu_i^2\lambda^2-\lambda_t^2\geq 0\\
&(\mu_i^2\lambda^2-\lambda_t^2)&\cdot(u^-+\omega^+H)=0
\end{align}
The last three conditions express Coulomb's law for the impact. The conditions listed above are polynomial in the set of variables $\lbrace y,z,\theta,u,v,\omega,\lambda_t\rbrace$

\textbf{FF mode followed by an impact:}

\addtolength{\textheight}{-12cm}   





\bibliographystyle{ieeetr}
\bibliography{refs}

\end{document}